\newtheorem{theorem}{Theorem}
\newtheorem{preposition}{Proposition}
\newtheorem{lem}{Lemma}
\newtheorem{claim}{Claim}
\newtheorem{definition}{Definition}
\theoremstyle{definition}
\newtheorem{example}{Example}
\newtheorem*{prob*}{Problem}
\theoremstyle{remark}
\newtheorem{remark}{Remark}
\global\long\def\FF{\mathbb{F}}
\begin{document}

\title{ On the Necessity of Structured Codes for Communications over MAC with Feedback}
\author{
 \IEEEauthorblockN{Mohsen Heidari}
  \IEEEauthorblockA{EECS Department\\University of Michigan\\ Ann Arbor,USA \\
    Email: mohsenhd@umich.edu} 

\and
 \IEEEauthorblockN{Farhad Shirani}
%  \IEEEauthorblockN{}
  \IEEEauthorblockA{EECS Department\\University of Michigan\\ Ann Arbor,USA \\
    Email: fshirani@umich.edu }

  \and
  \IEEEauthorblockN{S. Sandeep Pradhan}
  \IEEEauthorblockA{EECS Department\\University of Michigan\\ Ann Arbor,USA \\
    Email: pradhanv@umich.edu}
}
\IEEEoverridecommandlockouts

\maketitle
\begin{abstract}
The problem of three-user multiple-access channel (MAC) with noiseless feedback is investigated. A new coding strategy is presented. The coding scheme builds upon the natural extension of the Cover-Leung (CL) scheme \cite{Cover-Leung}; and uses quasi-linear codes. A new single-letter achievable rate region is derived.
The new achievable region strictly contains the CL region. This is shown through an example. In this example, the coding scheme achieves optimality in terms of transmission rates. It is shown that any optimality achieving scheme for this example must have a specific algebraic structure. Particularly,  the codebooks must be closed under binary addition.   
\end{abstract}

\section{Introduction}
\IEEEPARstart{T}{he} problem of three user MAC with noiseless feedback is depicted in Figure \ref{fig: MAC with FB}. This communication channel consists of one receiver and multiple transmitters. After each channel use, the output of the channel is received at each transmitter noiselessly. Gaarder and Wolf \cite{Gaarder-Wolf} showed that the capacity region of the MAC can be expanded through the use of the feedback. This was shown in a binary erasure MAC.
Cover and Leung \cite{Cover-Leung} studied the two-user MAC with feedback, and developed a coding strategy using unstructured random codes.  

The main idea behind the CL scheme is to use superposition block-Markov encoding. The scheme operates in two stages. In stage one, the transmitters send the messages with a rate outside of the no-feedback capacity region (i.e. higher rates than what is achievable without feedback).  The transmission rate is taken such that each user can decode the other user's message using feedback. In this stage, the receiver is unable to decode the messages reliably, since the transmission rates are outside the no-feedback capacity region. Hence, the decoder only is able to form a list of ``highly likely" pairs of messages.  In the second stage, the encoders fully cooperate to send the messages (as if they are sent by a centralized transmitter). The receiver decodes the message pair from its initial list. After the initiation block, superposition coding is used to transmit the sequences corresponding to the two stages.  %At each new block, the encoder finds the output sequence corresponding to the stage one of the transmission of the new message, and the output sequence of the stage two of transmitting the previous message. The encoder superimposes the first sequence on the second and transmits the resulting codebook.

\begin{figure}[hbtp]
\centering
\includegraphics[scale=0.6]{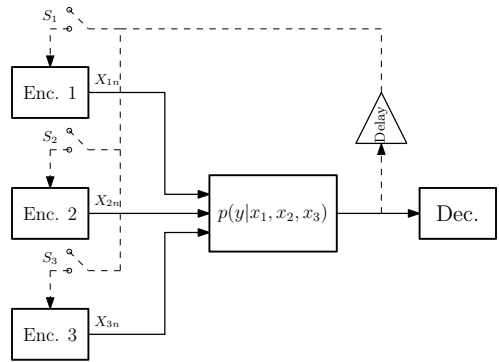}
\caption{The three-user MAC with noiseless feedback. If the switch $S_i$ is closed, the feedback is available at the $i$th encoder, where $i=1,2,3$.}
\label{fig: MAC with FB}
\end{figure}

The single-letter achievable rate region for the CL scheme was characterized in \cite{Cover-Leung}. Later, it was shown that the CL scheme achieves the feedback capacity for a class of MAC with feedback \cite{Willems-FB}. However, this is not the case for the general MAC with feedback \cite{Ozarow}. Several improvements to the CL achievable region were derived \cite{Lapidoth}, \cite{Ramji-Sandeep-FB}. 
In  \cite{Lapidoth} and \cite{Ramji-Sandeep-FB},  additional stages are appended to the CL scheme. In these schemes, the encoders decode each others' messages in several stages.  Kramer \cite{Kramer-thesis}, used the notion of \textit{directed information} to derive the capacity region of the two-user MAC with feedback. However, the characterization is not computable, since it is an infinite letter characterization. Finding a computable characterization of the capacity region remains an open problem.
 
In this work, we study the problem of three-user MAC with feedback. We propose a new coding scheme which builds upon the CL scheme. We derive a computable single-letter achievable rate region for this scheme, and show that the new region improves upon the previous known achievable regions for this problem. Recently, we showed that the application of structured codes results in improved performance for the problem of transmission of sources over the MAC 
  \cite{ISIT-paper-MAC-corr-sources}. Here, we use the ideas proposed in \cite{ISIT-paper-MAC-corr-sources} to prove the necessity of structured codes in the problem of MAC with feedback.  Specifically, we use \textit{quasi-linear} codes that are proposed in \cite{ISIT-paper-quasi-linear}. 

The coding scheme operates in three stages. In stage one, the encoders send independent messages with rates outside of the CL region.  Therefore, encoders are unable to decode each others' messages. However, each encoder can decode the binary sum of the messages of the other two encoders. In stage two, the messages are superimposed on the summation which is decoded in the previous stage. At the end of this stage, the encoders decode each others' messages. Stage three is similar to the second stage in CL scheme. We provide an example where the new coding scheme achieves optimal performance, whereas the previous schemes are suboptimal. Finally, we prove that any optimality achieving coding scheme must use encoders whose set of output sequences is linearly closed. 

The rest of the paper is organized as follows: Section \ref{sec: prelim} presents basic definitions. Section  \ref{sec: capacity} characterizes the capacity region of the three user MAC with feedback. Section  \ref{sec: converse} presents an example of a MAC with feedback, and discusses the necessity of structured codes for that setup. Section \ref{sec: achievable region} contains the main result of the paper, and characterizes a new achievable rate region. Finally, Section \ref{sec: conclusion} concludes the paper.

\section{Preliminaries and Model}\label{sec: prelim}
A three-user discrete memoryless MAC is defined by: 1) three input alphabets $\mathcal{X}_1,\mathcal{X}_2$, and $\mathcal{X}_3$, 2) an output alphabet $\mathcal{Y}$, and 3) a conditional probability distribution $p(y|x_1, x_2, x_3)$ for all $(y, x_1, x_2, x_3)\in \mathcal{Y}\times \mathcal{X}_1\times \mathcal{X}_2 \times \mathcal{X}_3$. Such setup is denoted by $(\mathcal{X}_1,\mathcal{X}_2, \mathcal{X}_3, \mathcal{Y}, P_{Y|X_1X_2X_3})$.  Let $y^n$ be the output sequence corresponding to $n$ uses of the channel, and $x^n_i$ be the input sequence of the channel. Then, following condition is satisfied:
\begin{align}\label{eq: chann probabilities}
p(y_n|y^{n-1}, x_1^{n-1},x_2^{n-1}, x_3^{n-1})=p(y_n|x_{1n}, x_{2n}, x_{3n}).
\end{align}
Figure \ref{fig: MAC with FB} illustrates this setup. In this work, we assume that noiseless feedback is available at a subset of the encoders. 
\begin{definition}
An $(N, M_1, M_2, M_3)$ transmission system for a given three-user MAC with feedback is defined as a sequence of encoding functions and a decoding function. If the feedback is available at the $i$th user, the corresponding encoding functions are defined as
\begin{align*}
f_{i,n}: \{1,2, \dots, M_i\} \times \mathcal{Y}^{n-1}\rightarrow \mathcal{X}_i, 
\end{align*}
where $i=1,2,3$, and $n=1,2,\dots, N$. If the feedback is not available at $i$th encoder, the corresponding encoding functions are defined as 
\begin{align*}
f'_{i,n}: \{1,2, \dots, M_i\} \rightarrow \mathcal{X}_i. 
\end{align*}
The decoding function is defined as 
\begin{align*}
g: \mathcal{Y}^N\rightarrow \{1,2, \dots, M_1\} \times \{1,2, \dots, M_2\}\times \{1,2, \dots, M_3\}.
\end{align*}

\end{definition}
Let $\Theta_i$ denotes the message for $i$th transmitter, $i=1,2,3$. We assume $\Theta_i$ is drawn randomly and uniformly from $\{1, 2, ..., M_i \}$. Furthermore, we assume $\Theta_i, \Theta_i$, and $\Theta_i$ are mutually independent. The average probability of error for this setup is  
\begin{align*}
\bar{P}=\frac{1}{M_1M_2M_3} \sum_{\theta_1, \theta_2, \theta_3} p(g(\mathbf{Y}^N)\neq (\theta_1, \theta_2, \theta_3)| \theta_1, \theta_2, \theta_3).
\end{align*} 

\begin{definition}
A rate triple $(R_1,R_2,R_3)$ is said to be achievable for a given MAC with feedback, if for any $\epsilon>0$ there exists an  $(N, M_1, M_2, M_3)$ transmission system such that 
\begin{align*}
\bar{P}<\epsilon, \quad \frac{1}{n}\log_2 M_i \geq R_i-\epsilon, \quad i=1,2,3.
\end{align*} 
\end{definition}
The capacity region of the MAC with feedback is the closure of the set of all achievable rate pairs $(R_1,R_2,R_3)$.

%\section{Known Results on MAC with feedback }

\section{Capacity Region of Three-user MAC with Feedback}\label{sec: capacity}
We extend the results of Kramer for the three-user MAC with feedback. We derive a multi-letter characterization for the capacity region. We use the notion of \textit{directed information} presented in \cite{Kramer-thesis}. The entropy of a random sequence $\mathbf{Y}^n$ causally conditioned on $\mathbf{X}^n$ is defined as 
\begin{align*}
H(\mathbf{Y}^n||\mathbf{X}^n)=\sum_{k=1}^nH(\mathbf{Y}_k|\mathbf{Y}^{k-1}, \mathbf{X}^k). 
\end{align*}
 Directed information from a sequence $\mathbf{X}^n$ to a sequence $\mathbf{Y}^n$ is defined as 
\begin{align*}
I(\mathbf{X}^n \rightarrow \mathbf{Y}^n)&=H(\mathbf{Y}^n)-H(\mathbf{Y}^n||\mathbf{X}^n).
\end{align*}
Directed information from a sequence $\mathbf{X}^n$ to a sequence $\mathbf{Y}^n$ when causally conditioned on $\mathbf{Z}^n$ is defined by 
\begin{align*}
I(\mathbf{X}^n \rightarrow \mathbf{Y}^n|| \mathbf{Z}^n)&=H(\mathbf{Y}^n||\mathbf{Z}^n)-H(\mathbf{Y}^n||\mathbf{X}^n\mathbf{Z}^n).
\end{align*}
For more convenience, we use the following notation
\begin{align}\label{eq: normalized direct info}
I_n(X\rightarrow Y)=\frac{1}{n}I(\mathbf{X}^n \rightarrow \mathbf{Y}^n).
\end{align}
With the above notation, we are ready to derive the capacity region. 
\begin{definition}\label{def: MAC FB capacity}
Given a positive integer $L$ and a MAC with feedback, define $\mathcal{R}_L$ as the convex hull of the set of all rates $(R_1,R_2,R_3)$ such that,
\begin{align}
R_i &\leq I_L(X_i \rightarrow Y|| X_j X_k)\\
R_i+R_j &\leq I_L(X_i,X_j \rightarrow Y||  X_k)\\\label{eq: capacity sum rate}
R_1+R_2+R_3 &\leq I_L(X_1,X_2,X_3 \rightarrow Y),
\end{align}
holds for all $i, j, k \in \{1,2,3\}, i\neq j\neq k$, where  the conditional distribution $p(x_{1,l},x_{2,l}, x_{1,l}| x_{1}^{l-1},x_{2}^{l-1}, x_{1}^{l-1} y^{l-1}) $ equals $\prod_{i=1}^3 p(x_{i,l}| x_{1}^{l-1} y^{l-1}). $
\end{definition}

\begin{preposition}\label{prep: capacity region of MAC-FB}
The capacity region of the three-user MAC with feedback is characterized by 
\begin{align*}
\mathcal{C_{FB}}= \bigcup_{L=1}^\infty \mathcal{R}_L
\end{align*}
\end{preposition}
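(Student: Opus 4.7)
The plan is to follow Kramer's approach for the two-user MAC with feedback in \cite{Kramer-thesis} and generalize it to three users. The argument decomposes into an achievability direction, establishing $\mathcal{R}_L \subseteq \mathcal{C_{FB}}$ for every positive integer $L$, and a converse direction, showing that any achievable triple lies in $\mathcal{R}_N$ for some $N$. Both directions treat blocks of length $L$ as super-symbols and exploit the causal conditioning structure imposed by the noiseless feedback.

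For achievability, I would fix $L$ together with a joint distribution satisfying the factorization condition in Definition \ref{def: MAC FB capacity}, and construct a block code of length $NL$ for large $N$. Codewords of length $L$ are drawn according to the prescribed $L$-letter marginal, with the feedback used at each encoder to realize the causal conditional distribution within each super-symbol. At the receiver, a joint-typicality decoder is applied to the $N$ super-symbols. From the decoder's viewpoint this is an ordinary three-user MAC random-coding analysis over a memoryless super-channel, yielding the bounds $R_i \leq I_L(X_i \rightarrow Y \| X_j X_k)$, $R_i + R_j \leq I_L(X_i,X_j \rightarrow Y \| X_k)$, and $R_1+R_2+R_3 \leq I_L(X_1,X_2,X_3 \rightarrow Y)$; convexity is obtained by standard time-sharing.

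For the converse, I would start from Fano's inequality applied to a reliable $(N,M_1,M_2,M_3)$ code, giving $H(\Theta_1,\Theta_2,\Theta_3 \mid \mathbf{Y}^N) \leq N\epsilon_N$ and, by the chain rule, $NR_i \leq I(\Theta_i;\mathbf{Y}^N \mid \Theta_j,\Theta_k) + N\epsilon_N$ together with analogous bounds for pairwise and full sums. The crucial step is to pass from message-based mutual informations to directed informations in the channel alphabet. Using the feedback structure $X_{i,n}=f_{i,n}(\Theta_i,\mathbf{Y}^{n-1})$, the conditional entropy of $Y_n$ given messages and past outputs equals that given the corresponding $X$-sequences and past outputs, so each message-based mutual information is upper bounded by the appropriate causally conditioned directed information. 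Expanding via the chain rule and collecting terms according to (\ref{eq: normalized direct info}) yields bounds of the form $NR_i \leq I(\mathbf{X}_i^N \rightarrow \mathbf{Y}^N \| \mathbf{X}_j^N \mathbf{X}_k^N) + N\epsilon_N$, placing $(R_1,R_2,R_3)$ in $\mathcal{R}_N$ up to $\epsilon_N \to 0$.

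The main obstacle I anticipate is verifying the product factorization of the conditional distribution on $(X_{1,l},X_{2,l},X_{3,l})$ given $(\mathbf{X}_1^{l-1},\mathbf{X}_2^{l-1},\mathbf{X}_3^{l-1},\mathbf{Y}^{l-1})$ required by Definition \ref{def: MAC FB capacity}; with three encoders the shared feedback history could superficially seem to couple the current inputs, but this is resolved by conditioning first on the past outputs, which are common to all encoders and make each $X_{i,l}$ a deterministic function of the message $\Theta_i$, and then invoking the mutual independence of $\Theta_1,\Theta_2,\Theta_3$. The remaining manipulations, namely conditional chain rules, causal-conditioning identities for $H(\mathbf{Y}^n\|\mathbf{X}^n)$, and the cardinality-like union over $L$ to obtain $\mathcal{C_{FB}} = \bigcup_{L=1}^\infty \mathcal{R}_L$, are routine extensions of the two-user case.
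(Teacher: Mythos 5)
The paper does not actually supply a proof of Proposition \ref{prep: capacity region of MAC-FB}; it states only that the argument is a generalization of Kramer's two-user result and omits the details. Your outline follows exactly the route the authors intend: achievability by treating length-$L$ blocks as super-letters of a memoryless super-channel with random coding and joint-typicality decoding, and a converse via Fano's inequality followed by the passage from message-based mutual informations to causally conditioned directed informations. Both halves are structurally sound as far as they go.

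There is, however, one genuine gap, and it sits precisely at the step you yourself single out as the main obstacle. You claim the factorization $p(x_{1,l},x_{2,l},x_{3,l}\mid x_1^{l-1},x_2^{l-1},x_3^{l-1},y^{l-1})=\prod_i p(x_{i,l}\mid x_i^{l-1},y^{l-1})$ follows by ``conditioning first on the past outputs\dots and then invoking the mutual independence of $\Theta_1,\Theta_2,\Theta_3$.'' This does not work as stated: the messages are mutually independent a priori but are \emph{not} conditionally independent given $\mathbf{Y}^{l-1}$, because the posterior $p(\theta_1,\theta_2,\theta_3\mid y^{l-1})$ carries the likelihood factor $\prod_{m<l} p\bigl(y_m\mid f_{1,m}(\theta_1,y^{m-1}),f_{2,m}(\theta_2,y^{m-1}),f_{3,m}(\theta_3,y^{m-1})\bigr)$, which does not split into per-message factors (consider a binary adder channel, where observing the output couples the messages). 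The repair, which is the content of the corresponding lemma in Kramer's thesis, is to condition on the past \emph{inputs} as well: once $(x_1^{l-1},x_2^{l-1},x_3^{l-1},y^{l-1})$ is fixed, that likelihood factor is a constant, so the posterior on $(\theta_1,\theta_2,\theta_3)$ is proportional to $\prod_i p(\theta_i)\,\mathbbm{1}\{x_i^{l-1}(\theta_i,y^{l-1})=x_i^{l-1}\}$, which factors with the $i$th term depending only on $(\theta_i,x_i^{l-1},y^{l-1})$; the required conditional independence of the current inputs then follows because $X_{i,l}=f_{i,l}(\Theta_i,y^{l-1})$. With this correction the converse goes through. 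On the achievability side, the only caveat is that the ``codewords of length $L$'' drawn from the $L$-letter marginal are really feedback strategies (code trees) sampled i.i.d.\ across super-blocks, not fixed input sequences; your phrasing glosses over this but the intended construction is the standard one.
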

\begin{proof}
The proof is a generalized version of the result given in \cite{Kramer-thesis} and is omitted.
\end{proof}
Note that this is a multi-letter characterization, and is not computable.  
\section{An Example of a MAC with Feedback}\label{sec: converse}
In this section, we show that coding strategies based on structured codes are necessary for the problem of MAC with feedback. We first provide an example of a MAC with feedback. Then, we propose a coding scheme using linear codes, and  show that such coding scheme achieves optimality in terms of achievable rates.

\begin{example}\label{ex: example}
Consider the three-user MAC with feedback problem depicted in Figure \ref{fig: Exp. setup}. In this setup, there is a MAC with three inputs. The $i$th input is denoted by the pair $(X_{i1}, X_{i2})$, where $i=1,2,3$. The output of the channel is denoted by the vector $(Y_1, Y_{21}, Y_{22})$. Noiseless feedback is available only at the third transmitter.  
\begin{figure}[hbtp]
\centering
\includegraphics[scale=0.6]{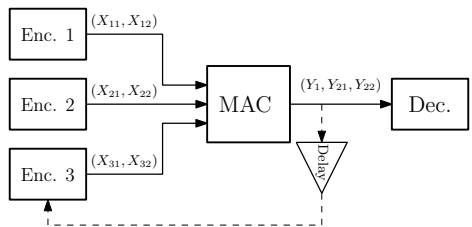}
\caption{The MAC with feedback setup for Example \ref{ex: example}.}
\label{fig: Exp. setup}
\end{figure}
The MAC in this setup consists of two parallel channels. The first channel is a three-user binary additive MAC with inputs $(X_{11}, X_{21}, X_{31})$, and output $Y_1$. The output is related to the inputs via the relation 
\begin{align*}
Y_1=X_{11}\oplus X_{21} \oplus X_{31}\oplus \tilde{N}_\delta,
\end{align*}
where $\tilde{N}_\delta$ is a Bernoulli random variable with bias $\delta$, and is independent of the inputs. 
%\begin{figure}[hbtp]
%\centering
%\includegraphics[scale=0.8]{Example_MAC_Y2_concept.eps}
%\caption{ss}
%\label{fig: Exp. MAC}
%\end{figure}

%\begin{figure}[hbtp]
%\centering
%\includegraphics[scale=0.8]{first_chann.eps}
%\caption{The first channel for the MAC in Example \ref{ex: example}.}
%\label{fig: Exp. first chann}
%\end{figure}
The second channel is a MAC with $(X_{12}, X_{22}, X_{32})$ as the inputs, and  $(Y_{21},Y_{22})$ as the output. The relation between the output and the input of the channel is depicted in Figure \ref{fig: Exp. second chann}. The channel operates in two states. If the condition $X_{31}=X_{12}\oplus X_{22}$ holds, the channel would be in the first state (the left channel in Figure \ref{fig: Exp. second chann}); otherwise it would be in the second state (the right channel in Figure \ref{fig: Exp. second chann}). In this channel, $N_\delta $ and $N'_\delta$ are Bernoulli random variables with identical bias $\delta$. Whereas, $N_{1/2} $ and $N'_{1/2}$ are Bernoulli random variables with bias $\frac{1}{2}$. We assume that $\tilde{N}_\delta, N_\delta, N'_\delta,N_{1/2} $, and $N'_{1/2}$ are mutually independent, and are independent of all the inputs. 
\begin{figure}[hbtp]
\centering
\includegraphics[scale=0.6]{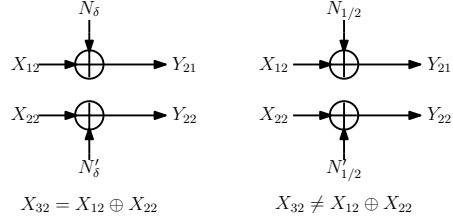}
\caption{The second channel for Example \ref{ex: example}. If the condition $X_{31}=X_{12}\oplus X_{22}$ holds, the channel would be the one on the left; otherwise it would be the right channel.}
\label{fig: Exp. second chann}
\end{figure}
\end{example}

We use linear codes to propose a new coding strategy for the setup given in Example \ref{ex: example}.  The scheme uses a large number $L$ of blocks. The length of each block is $n$. Each encoder has two outputs, one for each channel. We use identical linear codes  with length $n$ and rate $\frac{k}{n}$ for each transmitter.  The coding scheme at each block is performed in two stages. In the first stage, each transmitter encodes the fresh message at the beginning of the block $l$, where $1\leq l \leq L$. The encoding process is performed using the identical linear codes. At the end of the block $l$, the feedback is received by the third user. In stage 2, the third user uses the feedback from the first channel (that is $Y_1$) to decode the binary sum of the messages of the other encoders. Then, it encodes the summation, and sends it through its second output. If the decoding process is successful at the third user, then the relation $X_{32}=X_{12}\oplus X_{22}$ holds with probability one. This is because identical linear codes are used to encode the messages.  As a result of this equality, the channel in Figure \ref{fig: Exp. second chann} is in the first state with probability one. 
In the next Lemma, we show that the rate $$(1-h(\delta), 1-h(\delta), 1-h(\delta))$$ is achievable using this strategy.
%
%
%Therefore, the total length of the scheme is $nL$. Each encoder has two outputs, one for each channel. The first and the second outputs of the $i$th encoder in block $l$ is denoted by $\mathbf{X}^n_{i1,[l]}$ and $\mathbf{X}^n_{i2,[l]}$, respectively, where $i=1,2,3$ and $1\leq l \leq L$.
%
% At the beginning of block $l$, the $i$th user encodes its new messages, and sends it through the first channel. The encoding processes are performed using identical linear codes for each transmitter.  For the second output, the first encoder copies $\mathbf{X}^n_{11,[l-1]}$ from block $l-1$, and sends it through the second channel. Therefore, in this case we obtain the relation
%\begin{align*}
%\mathbf{X}^n_{12,[l]}=\mathbf{X}^n_{11,[l-1]}.
%\end{align*}
%Similarly, the second encoder sets $\mathbf{X}^n_{22,[l]}=\mathbf{X}^n_{21,[l-1]}$, and sends its second output though the second channel. The third user, however, applies a different strategy for its second output. At the end of block $l-1$, upon receiving the feedback from the channel, the third user decodes $\mathbf{X}^n_{11, [l-1]}\oplus \mathbf{X}^n_{21,[l-1]}$, which is the sum of the first output of encoder 1 and 2. Then, in block $b$, this transmitter copies the decoded vector from the previous block, and sends it to the second channel. 

\begin{lem}\label{lem: example achievable rate using linear codes}
For the channel given in Example \ref{ex: example}, the rate triple $(1-h(\delta), 1-h(\delta), 1-h(\delta))$ is achievable.
\end{lem}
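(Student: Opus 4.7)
The plan is to use block-Markov coding over $L$ blocks of length $n$ with a common binary linear code $\mathcal{C}$ of rate $R<1-h(\delta)$ used by every transmitter. The structural fact that will drive the proof is that $\mathcal{C}$ is closed under addition, so $C\oplus C'\in\mathcal{C}$ for all $C,C'\in\mathcal{C}$; a single decoder for $\mathcal{C}$ over BSC$(\delta)$ can therefore recover either a single codeword or the sum of two codewords. I would first pick $\mathcal{C}$ so that its ML decoder attains error probability $\epsilon_n\to 0$ over BSC$(\delta)$; existence follows from the standard random-linear-code argument.

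In block $l$ each user $i$ encodes a fresh message into a codeword $C_i^{(l)}\in\mathcal{C}$. Following the scheme description preceding the lemma, users $1$ and $2$ transmit $C_i^{(l)}$ and $C_i^{(l-1)}$ on their two outputs, while user $3$ puts its previous-block sum estimate $\hat S^{(l-1)}$ on $X_{31}^{(l)}$ and its fresh codeword $C_3^{(l)}$ on $X_{32}^{(l)}$. Subtracting the known $X_{31}^{(l)}$ from the feedback gives $Y_1^{(l)}\oplus X_{31}^{(l)} = C_1^{(l)}\oplus C_2^{(l)}\oplus\tilde N_\delta^{\,n}$, which by linearity of $\mathcal{C}$ is a BSC$(\delta)$ observation of a codeword, so ML decoding within $\mathcal{C}$ produces $\hat S^{(l)}$ with error at most $\epsilon_n$. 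When the previous estimate was correct one has $X_{31}^{(l)}=X_{12}^{(l)}\oplus X_{22}^{(l)}$, placing the second channel in its good state; an induction and union bound then show that the good state holds throughout all $L$ blocks with probability at least $1-L\epsilon_n$.

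For the receiver I would use backward decoding. On the good-state event, the clean second-channel outputs of block $l$ provide BSC$(\delta)$ views of $X_{12}^{(l)}=C_1^{(l-1)}$ and $X_{22}^{(l)}=C_2^{(l-1)}$, which are decoded within $\mathcal{C}$ at rate $R$. With $C_1^{(l-1)}$ and $C_2^{(l-1)}$ in hand, and hence $X_{31}^{(l-1)}$ known from the previously decoded earlier block, the receiver strips both from $Y_1^{(l-1)}$ to obtain $C_3^{(l-1)}\oplus\tilde N_\delta^{\,n}$, from which user $3$'s block-$(l-1)$ message is recovered at rate $R$. A union bound over the $O(L)$ decoding events keeps the total error probability $O(L\epsilon_n)$, which vanishes as $n\to\infty$ for any fixed $L$. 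Each user sends $LnR$ fresh bits in roughly $Ln$ channel uses, with an initialization and termination cost of only $O(1/L)$ in rate, so taking $L,n\to\infty$ and $R\uparrow 1-h(\delta)$ yields the claimed rate triple.

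The main obstacle is the chained dependency among events: user $3$'s decoding in block $l$ is conditioned on correctness in block $l-1$, and the receiver's backward pass conditions on the good-state event holding across all blocks. The saving grace is that each individual decoder is an ML decoder of a linear code of rate $R<1-h(\delta)$ over BSC$(\delta)$, whose error probability $\epsilon_n$ decays in $n$ independently of $l$, so a flat union bound over the $O(L)$ events is enough once $n$ is sufficiently large.
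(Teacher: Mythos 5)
Your overall architecture is the same as the paper's: identical linear codes at all three users, block-Markov operation in which users 1 and 2 resend their previous codewords on their second outputs, user 3 decodes only the XOR $C_1\oplus C_2$ from the feedback (exploiting closure of the linear code under addition so that the sum is again a rate-$R$ codeword seen through a BSC$(\delta)$), staggered/backward decoding at the receiver, and a union bound over $O(L)$ events with an $O(1/L)$ rate loss. However, there is a concrete internal inconsistency in your assignment of user 3's two outputs that breaks the recovery of $W_3$. You place the sum estimate $\hat S^{(l-1)}$ on $X_{31}^{(l)}$ and the fresh codeword $C_3^{(l)}$ on $X_{32}^{(l)}$. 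Under that encoding, $Y_1^{(l-1)}=C_1^{(l-1)}\oplus C_2^{(l-1)}\oplus \hat S^{(l-2)}\oplus \tilde N_\delta^{\,n}$, so after the receiver strips $C_1^{(l-1)}\oplus C_2^{(l-1)}$ and the (already known) $X_{31}^{(l-1)}=\hat S^{(l-2)}$, it is left with pure noise --- the quantity $C_3^{(l-1)}\oplus\tilde N_\delta^{\,n}$ you claim to obtain is not there. Meanwhile $C_3^{(l-1)}$ sits on $X_{32}^{(l-1)}$, which by your own description of the good-state second channel (whose outputs are BSC$(\delta)$ views of $X_{12}$ and $X_{22}$ only) never reaches the receiver. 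So as written, user 3's message is conveyed nowhere.

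The fix is the assignment the paper actually uses: user 3 sends its fresh codeword on $X_{31}^{(l)}=C_3^{(l)}$ and the decoded sum on $X_{32}^{(l)}=\hat S^{(l-1)}$; the state-switch condition is $X_{32}=X_{12}\oplus X_{22}$ (this is how it is used in the body's scheme description and in Lemmas 2 and 3; the appearance of $X_{31}$ in the Example's prose is evidently a typo, which is presumably what misled you). With that swap, user 3 subtracts its own $C_3^{(l)}$ from the feedback to decode $C_1^{(l)}\oplus C_2^{(l)}$, the receiver recovers $C_1^{(l-1)},C_2^{(l-1)}$ from the second channel, strips their XOR from $Y_1^{(l-1)}$ to get $C_3^{(l-1)}\oplus\tilde N_\delta^{\,n}$, and every link runs at rate up to $1-h(\delta)$; the remainder of your error analysis then goes through and coincides with the paper's.
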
 
\begin{proof}
The proof is given in Appendix \ref{sec: proof of lemma 1 of the example}.
\end{proof}
\begin{remark}\label{rem: optimality}
Based on Preposition \ref{prep: capacity region of MAC-FB}, the triple $(1-h(\delta), 1-h(\delta), 1-h(\delta))$ is a corner point in the capacity region of the channel in Example \ref{ex: example}. This implies the optimality of the above coding strategy in terms of achievable rates. 
\end{remark}

The above coding strategy is different from known schemes in two ways: 1) Identical linear codes are used to encode the messages, 2) The third user uses feedback to decode only the binary sum others' messages.  

\subsection{Converse}
One implication of Remark \ref{rem: optimality} is that the proposed coding scheme achieves optimality.  We show a stronger result in this Subsection. We prove that every coding scheme that achieves  $(1-h(\delta), 1-h(\delta), 1-h(\delta))$, should carry certain algebraic structures such as closeness under the binary addition. 

Suppose there exists a $(N, M_1, M_2, M_3)$ transmission system with rates close to $R_i=1-h(\delta)$, and average probability of error close to $0$, in particular
\begin{align*}
\bar{P}<\epsilon, \quad \frac{1}{n}\log_2 M_i \geq 1-h(\delta)-\epsilon, \quad i=1,2,3,
\end{align*} 
where $\epsilon>0$ is sufficiently small. Since there is no feedback at the first and second encoder, the transmission system predetermines a codebook for user 1 and 2. Note that there are two outputs for encoder 1 and 2. Suppose $\mathcal{C}_{12}$ and $\mathcal{C}_{22}$ are the codebooks assigned to the second output of encoder 1 and encoder 2, respectively. 

Let $\mathbf{X}^N_{i2}$ be the second output of encoder $i$, where $i=1,2,3$.  Let $X_{i2, l}$ denote the $l$th component of $X^N_{i2}$, where  $1\leq l \leq N, ~ i=1,2,3$.  The following lemmas hold for this transmission system. 

\begin{lem}\label{lem: x_2+x_1 needs to be decoded}
 For any fixed $c>0$, define
\begin{align*}
\mathcal{I}_c^N:=\{ l\in [1:N]: P(X_{32, l} \neq X_{12, l}\oplus X_{22, l}) \geq c \}.
\end{align*}
 Then, the inequality $\frac{|\mathcal{I}_c^N|}{N}\leq   \frac{\eta(\epsilon)}{2c(1-h(\delta))}$ holds, where $\eta(\epsilon)$ is a function such that,  $\eta(\epsilon) \rightarrow 0$, as $\epsilon\rightarrow 0$. 
\end{lem}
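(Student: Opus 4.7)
My approach is to combine Fano's inequality with a tight per-letter upper bound on the mutual information carried by the second sub-channel, where this bound degrades linearly in $P(E_l=1)$ for $E_l := \mathbbm{1}[X_{32,l}\neq X_{12,l}\oplus X_{22,l}]$. First I would invoke Fano's inequality: since $\bar P < \epsilon$ and the messages are uniform and independent,
\begin{align*}
H(\Theta_1,\Theta_2,\Theta_3\mid Y_1^N, Y_{21}^N, Y_{22}^N)\leq N\epsilon_F,
\end{align*}
with $\epsilon_F\to 0$ as $\epsilon\to 0$. Combined with the rate hypothesis $\log_2 M_i\geq N(1-h(\delta)-\epsilon)$ for each $i$, this yields
\begin{align*}
3N(1-h(\delta))-N\eta_1(\epsilon)\leq I(\mathbf{X}^N;Y_1^N,Y_{21}^N,Y_{22}^N),
\end{align*}
where $\eta_1(\epsilon)\to 0$ and $\mathbf{X}^N$ denotes the full input tuple $(X^N_{i1},X^N_{i2})_{i=1}^3$.

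Next, because the noises $\tilde N_\delta, N_\delta, N'_\delta, N_{1/2}, N'_{1/2}$ are mutually independent, conditionally on $\mathbf{X}^N$ the outputs $Y_1^N$ and $Y_2^N:=(Y_{21}^N,Y_{22}^N)$ are independent, so
\begin{align*}
I(\mathbf{X}^N;Y_1^N,Y_2^N)\leq I(\mathbf{X}^N;Y_1^N)+I(\mathbf{X}^N;Y_2^N).
\end{align*}
The first term is bounded by $N(1-h(\delta))$ since the first sub-channel is a three-input binary additive channel with Bernoulli$(\delta)$ noise. The second term must therefore absorb the remaining $\approx 2N(1-h(\delta))$, leaving very little slack.

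The heart of the proof is a per-letter bound on the second-sub-channel contribution. By memorylessness,
\begin{align*}
I(\mathbf{X}^N;Y_2^N)\leq \sum_{l=1}^N I(X_{12,l},X_{22,l},X_{32,l};Y_{2l}).
\end{align*}
Since $E_l$ is a deterministic function of $(X_{12,l},X_{22,l},X_{32,l})$, and since in state one the outputs are two independent BSC$(\delta)$ observations while in state two each binary output is corrupted by an independent Bernoulli$(1/2)$ noise, one obtains
\begin{align*}
H(Y_{2l}\mid X_{12,l},X_{22,l},X_{32,l})=2h(\delta)\,P(E_l=0)+2\,P(E_l=1).
\end{align*}
Combined with $H(Y_{2l})\leq 2$, this gives
\begin{align*}
I(X_{12,l},X_{22,l},X_{32,l};Y_{2l})\leq 2(1-h(\delta))\bigl(1-P(E_l=1)\bigr).
\end{align*}

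Assembling the three bounds yields $2(1-h(\delta))\sum_l P(E_l=1)\leq N\eta(\epsilon)$ for some $\eta(\epsilon)\to 0$, and a Markov step closes the argument:
\begin{align*}
c\cdot |\mathcal{I}_c^N|\leq \sum_{l\in\mathcal{I}_c^N}P(E_l=1)\leq \sum_{l=1}^N P(E_l=1)\leq \frac{N\eta(\epsilon)}{2(1-h(\delta))}.
\end{align*}
The main obstacle is the per-letter bound of the third paragraph: one must identify the correct state indicator $E_l$ (crucially a function of the inputs, so that the memoryless chain-rule step stays tight) and evaluate the conditional output entropy in both states, so that $P(E_l=1)$ emerges with exactly the coefficient $2(1-h(\delta))$. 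Once this per-letter bound is in hand, the rest is arithmetic.
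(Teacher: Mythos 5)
Your proof is correct and follows essentially the same route as the paper's: Fano's inequality, an exact per-letter evaluation of the conditional output entropy as a function of $q_l = P(X_{32,l}\neq X_{12,l}\oplus X_{22,l})$ that produces the coefficient $2(1-h(\delta))$, and a final Markov-inequality step over $\mathcal{I}_c^N$. The only cosmetic difference is that you split the two parallel sub-channels before bounding the output entropies, whereas the paper bounds $H(\mathbf{Y}^N)\leq 3N$ jointly and computes $H(Y_l\mid X_{1,l}X_{2,l}X_{3,l})=(1+2\bar q_l)h(\delta)+2q_l$ in one shot; both yield the identical inequality $2(1-h(\delta))\sum_l q_l \leq N\eta(\epsilon)$.
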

\begin{proof}
The proof is given in Appendix \ref{sec: proof of lemma 2 of the example}.
\end{proof}
The Lemma implies that in order to achieve $(1-h(\delta), 1-h(\delta), 1-h(\delta))$, the third user needs to decode $ X_{12, l}\oplus X_{22, l}$ for ``almost all" $l\in [1:N]$. This requirement is necessary to insure that the channel given in Figure \ref{fig: Exp. second chann} is in the first state. 

In the next step, we use the results of Lemma \ref{lem: x_2+x_1 needs to be decoded}, and  drive two necessary conditions for decoding $X_{12}\oplus X_{22}$.

\begin{lem}\label{lem: structure in the code}
The following holds 
\begin{align*}
\frac{1}{N}\big| ~\log ||\mathcal{C}_{12}\oplus \mathcal{C}_{22}||- \log ||\mathcal{C}_{12}||~\big| \leq \lambda_1(\epsilon),\\
\frac{1}{N}\big| ~\log ||\mathcal{C}_{12}\oplus \mathcal{C}_{22}||- \log ||\mathcal{C}_{22}||~\big| \leq \lambda_2(\epsilon),
\end{align*}
where $\lambda_j(\epsilon) \rightarrow 0$, as $\epsilon\rightarrow 0, j=1,2$.
\end{lem}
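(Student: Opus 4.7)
The easy direction $\log|\mathcal{C}_{12}\oplus \mathcal{C}_{22}| \geq \log|\mathcal{C}_{12}|$ is immediate: for any fixed $c_2 \in \mathcal{C}_{22}$, the translate $c_2 \oplus \mathcal{C}_{12}$ sits inside $\mathcal{C}_{12}\oplus \mathcal{C}_{22}$ and has cardinality $|\mathcal{C}_{12}|$, and the analogous statement bounds $|\mathcal{C}_{22}|$ from below. So the content of the lemma is the reverse inequalities $\log|\mathcal{C}_{12}\oplus \mathcal{C}_{22}| \leq \log|\mathcal{C}_{j2}| + N\lambda_j(\epsilon)$ for $j=1,2$. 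My plan is to show that both $\log|\mathcal{C}_{j2}|$ and $\log|\mathcal{C}_{12}\oplus \mathcal{C}_{22}|$ lie within $o(N)$ of the same value $N(1-h(\delta))$, so that their difference is forced to be $o(N)$.

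Set $\mathbf{S}^N := \mathbf{X}_{12}^N \oplus \mathbf{X}_{22}^N$. The first step converts Lemma~\ref{lem: x_2+x_1 needs to be decoded} into a conditional-entropy bound. Coordinate-wise Fano gives $H(S_l \mid X_{32,l}) \leq h(c)$ when $l \notin \mathcal{I}_c^N$ and the trivial $\leq 1$ bound otherwise; summing over $l$ and optimizing $c=\sqrt{\eta(\epsilon)}$ with the estimate $|\mathcal{I}_c^N|/N \leq \eta(\epsilon)/(2c(1-h(\delta)))$ yields
\begin{equation*}
H(\mathbf{S}^N \mid \mathbf{X}_{32}^N) \leq N\mu(\epsilon), \qquad \mu(\epsilon)\to 0 \text{ as } \epsilon\to 0.
\end{equation*}

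For the lower bound $\log|\mathcal{C}_{j2}| \geq N(1-h(\delta)) - N\sigma_j(\epsilon)$, note that encoders~1 and~2 lack feedback, so $\mathbf{X}_{j2}^N = \phi_j(\theta_j)$ is a deterministic function of $\theta_j$. Fano at the decoder forces $\phi_j$ to be essentially injective and $\mathbf{X}_{j2}^N$ nearly uniform on $\mathcal{C}_{j2}$, so $H(\mathbf{X}_{j2}^N)$ is within $o(N)$ of $\log|\mathcal{C}_{j2}|$. The first sub-channel has sum-rate capacity $1-h(\delta)$ and is exhausted by $\theta_3$'s transmission through $X_{31}^N$ (by Lemma~\ref{lem: x_2+x_1 needs to be decoded}, $\mathbf{X}_{32}^N$ depends essentially only on $\theta_1,\theta_2$ and so carries negligible $\theta_3$-information); this forces $\theta_j$'s $N(1-h(\delta))$ bits of information to flow entirely through $X_{j2}^N$ in the second sub-channel, yielding the bound.

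For the upper bound $\log|\mathcal{C}_{12}\oplus \mathcal{C}_{22}| \leq N(1-h(\delta)) + N\tau(\epsilon)$, I would exploit Step~1 together with the causal representation $\mathbf{X}_{32}^N = f(\theta_3, \mathbf{Y}^{N-1})$: the near-equality $\mathbf{X}_{32}^N \approx \mathbf{S}^N$ pushes the effective rate of $\mathbf{S}^N$ down to the directed information that encoder~3 can extract from the feedback toward computing the XOR, and the independence of $\theta_3$ from $(\theta_1,\theta_2)$ together with the BSC-$\delta$ bottleneck of the first sub-channel (after encoder~3 subtracts its own $X_{31}^N$) limits this rate to $(1-h(\delta)) + o(1)$. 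Combining the two bounds gives $\log|\mathcal{C}_{12}\oplus \mathcal{C}_{22}| - \log|\mathcal{C}_{j2}| \leq N(\tau(\epsilon) + \sigma_j(\epsilon))$, as desired. The main obstacle is precisely this upper bound: although encoder~3 also has feedback from the second sub-channel and could naively decode $(\mathbf{X}_{12}^N,\mathbf{X}_{22}^N)$ jointly at rate $2(1-h(\delta))$, the lemma demands the tighter $(1-h(\delta))$-rate bound for the XOR alone, and justifying this reduction requires a careful directed-information accounting that is consistent with the strict causality of encoder~3's output and invokes the near-equality from Step~1 to argue that the only information about $(\theta_1,\theta_2)$ that actually reaches $\mathbf{X}_{32}^N$ is the XOR stream.
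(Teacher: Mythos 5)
Your proposal is correct and follows essentially the same route as the paper's proof: the trivial translate containment gives one direction, Lemma~\ref{lem: x_2+x_1 needs to be decoded} plus Fano over the BSC-$\delta$ bottleneck at encoder~3 gives $\frac{1}{N}\log||\mathcal{C}_{12}\oplus\mathcal{C}_{22}||\leq 1-h(\delta)+o(1)$, and a decoder-side converse (the receiver must recover $M_j$ through the second sub-channel) gives $\frac{1}{N}\log||\mathcal{C}_{j2}||\geq 1-h(\delta)-o(1)$. The ``careful accounting'' you flag as the main obstacle is resolved in the paper by the observation that, conditioned on $X_{32,l}$, the second sub-channel's outputs are independent of $X_{12,l}\oplus X_{22,l}$ (trivially in state one, and because the outputs are pure $\mathrm{Ber}(1/2)$ noise in state two), so encoder~3's XOR estimate is effectively a function of $Z^N=X_{11}^N\oplus X_{21}^N\oplus N_\delta^N$ alone.
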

\begin{proof}
The proof is given in Appendix \ref{seq: lem 3}.
\end{proof}
As a result of this lemma,  $\log ||\mathcal{C}_{12}\oplus \mathcal{C}_{22}||$ needs to be close to $\log ||\mathcal{C}_{12}||$ and $\log ||\mathcal{C}_{22}||$. This implies that $\mathcal{C}_{12}$ and $\mathcal{C}_{22}$ possesses an algebraic structure, and are \textit{almost} close under the binary addition. Not that for the case of unstructured random codes $||\mathcal{C}_{12}\oplus \mathcal{C}_{22}||\approx ||\mathcal{C}_{12}||\times ||\mathcal{C}_{22}||$. Hence, unstructured random coding schemes are suboptimal in this example.

\begin{remark}
The three-user extension of CL scheme is suboptimal. Because, the conditions in Lemma \ref{lem: structure in the code} are not satisfied. 
\end{remark}

%Suppose the third user decodes the sum $\mathbf{X}^n_{11, [b-1]}\oplus \mathbf{X}^n_{21,[b-1]}$. In this case, $\mathbf{X}^n_{11, [b-1]}\oplus \mathbf{X}^n_{21,[b-1]}$$
%The message for the $i$th transmitter in block $l$ is assumed to be a binary vector of length $k$. We represent such vector by $\mathbf{W}_{i,[l]}$, $1\leq l \leq L, ~ i=1,2,3$. Furthermore, we assume $\mathbf{W}_{i, [l]}$ are mutually independent random variables, each distributed uniformly over $\FF_2^k$.
%Suppose $\mathbf{G}$ is a $k\times n$ matrix chosen randomly and uniformly from $\FF_2^{k\times n}$. Also, $\mathbf{B}_1, \mathbf{B}_2$, and $\mathbf{B}_3$ are mutually independent random vectors of length $n$. The distribution of $\mathbf{B}_i$ is uniform over $\FF_2^n$.

\section{A New Achievable Rate Region} \label{sec: achievable region} 
%In the previous Section, we proposed a coding strategy based on linear codes. In this Section, we use the intuition behind this coding strategy, and the approach in \cite{ISIT-paper-MAC-corr-sources} to propose a new coding scheme. 
In this Section, we use our intuition about the coding scheme in Example \ref{ex: example}, and derive a new computable single-letter achievable rate region for the three-user MAC with feedback problem.   
\begin{definition}
For a given set  $\mathcal{U}$ and a three-user MAC with feedback $(\mathcal{X}_1, \mathcal{X}_2, \mathcal{X}_3, \mathcal{Y}, P_{Y|X_1X_2X_3})$, define $\mathscr{P}$ as the collection of all distributions $P$ of the form 
\begin{align*}
 p(u)p(v_1,v_2,v_3) \prod_{i=1}^3 p(t_i)p(x_i|u,t_i,v_i) p(y|x_1,x_2,x_3),
\end{align*}
for all  $ y\in \mathcal{Y}, u\in \mathcal{U}, t_i\in \FF_2, v_i \in \FF_2, x_i \in \mathcal{X}_i, ~ i=1,2,3$, where  1) $T_1, T_2, T_3$ are mutually independent with uniform distribution over $\FF_2$ , 2) $V_1, V_2, V_3$ are pairwise independent, 3) $p(v_i)=\frac{1}{2}$, and  3) $p(v_1,v_2,v_3)=\frac{1}{4}$.
\end{definition}

Fix a distribution $P\in \mathscr{P} $. Denote $S_i=(X_i, T_i, V_i)$ for $i=1,2,3$. Consider two sets of random variables $(U, S_1,S_2,S_3, Y)$ and $(\tilde{U},\tilde{S}_1, \tilde{S}_2, \tilde{S}_3, \tilde{Y})$. Suppose the distribution of each set of the random variables is $P$.  Then with this notation we have 
\begin{align*}
P_{U S_1 S_2 S_3  Y}=P_{\tilde{U} \tilde{S}_1 \tilde{S}_2\tilde{S}_3 \tilde{Y}}=P
\end{align*}
    
\begin{theorem}\label{thm: MAC-FB achievable}
Consider a MAC $(\mathcal{X}_1, \mathcal{X}_2, \mathcal{X}_3, \mathcal{Y}, P_{Y|X_1X_2X_3})$, and a distribution $P\in \mathscr{P}$. For any subset $\mathcal{A}\subseteq \{1,2,3\}$, and for any  distinct elements $i, j, k \in \{1,2,3\}$ the following bounds hold
\begin{align*}
R_\mathcal{A}&\leq I(X_\mathcal{A};Y|US_{\mathcal{A}^c} \tilde{V}_1 \tilde{V}_2 \tilde{V}_3)+I(U;Y|\tilde{U}\tilde{Y})\\
R_i+R_j &\leq I(T_i\oplus T_j; Y| UT_k X_k \tilde{V}_1 \tilde{V}_2 \tilde{V}_3)\\&+I(\tilde{X}_i \tilde{X}_j;\tilde{Y}|\tilde{U} \tilde{S}_k \tilde{V}_1 \tilde{V}_2 \tilde{V}_3 V_k)\\&+I(\tilde{X}_i \tilde{X}_j;Y | \tilde{U} \tilde{S}_k \tilde{V}_1 \tilde{V}_2 \tilde{V}_3 U S_k \tilde{Y})\\
R_i+R_j &\leq \frac{H(W_i)+H(W_j)}{H(W_i\oplus W_j)} I(T_i\oplus T_j;Y|UT_k X_k),
 \end{align*}
 where 1) $W_i$, is a Bernoulli random variable that is independent of all other random variables, 2) the equality $V_i=\tilde{T}_j\oplus \tilde{T}_k$ holds with probability one, and 3) the Markov chain $$\tilde{U},\tilde{S}_1, \tilde{S}_2, \tilde{S}_3 \leftrightarrow V_1, V_2, V_3 \leftrightarrow U, T_i, X_i, $$ holds for $i=1,2,3$. 
\end{theorem}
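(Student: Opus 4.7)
}

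The plan is to construct a block-Markov scheme operating over $B$ blocks of length $n$, with three coding layers per block: (i) a \emph{structured} layer carrying the fresh messages through quasi-linear codes indexed by $T_i$; (ii) a \emph{sum-refinement} layer indexed by $V_i$ that transports, to the receiver and to the two other encoders, the sum $T_j\oplus T_k$ that encoder $i$ has decoded from feedback; and (iii) a \emph{cooperation} layer indexed by $U$ that retransmits the previous block's messages after all three encoders have resolved them individually. The tilde variables $(\tilde U,\tilde S_1,\tilde S_2,\tilde S_3,\tilde Y)$ should be interpreted as belonging to block $b-1$ and the untilded variables to block $b$; the joint distribution on $\mathscr{P}$ is exactly what this two-block dependency forces.

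First I would build the codebooks. For each message $m_i\in[1{:}M_i]$ I would draw an $n$-length label vector according to a shift of a linear code generated by a common matrix $G\in\mathbb{F}_2^{k\times n}$ (plus i.i.d.\ dithers); this gives the quasi-linear structure of \cite{ISIT-paper-quasi-linear} and guarantees that sums of labels are labels of a code of rate essentially $\max_{i\neq j}\tfrac{1}{n}\log M_i$ (this is where the ratio $(H(W_i)+H(W_j))/H(W_i\oplus W_j)$ enters, via the entropy spectrum of the underlying binary variables $W_i$ that shape the input distribution on $T_i$; it quantifies the rate penalty incurred when the decoder targets only the sum rather than the pair). Layered on top, I would draw $U$-codewords i.i.d.\ for each pair of previous-block messages, $V_i$-codewords i.i.d.\ conditionally on $U$ for each value of the pairwise sum, and $X_i$-codewords by passing $(U,T_i,V_i)$ through the single-letter kernel $p(x_i|u,t_i,v_i)$.

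Next I would describe encoding and encoder-side decoding block by block. In block $b$, encoder $i$ transmits $X_i^n$ built from (a) $U^n$ associated with the pair of messages it already knows from block $b-2$, (b) its fresh label $T_i^n(m_i(b))$, and (c) the side $V_i^n$ carrying the sum $T_j(b-1)\oplus T_k(b-1)$ it decoded from the feedback after block $b-1$. At the end of block $b$, encoder $k$ decodes $T_i(b)\oplus T_j(b)$ from $Y^n(b)$ by joint typicality against the sum codebook; this is the step whose reliability gives the third bound
\[
R_i+R_j \leq \frac{H(W_i)+H(W_j)}{H(W_i\oplus W_j)}\,I(T_i\oplus T_j;Y|UT_kX_k),
\]
where the prefactor is the standard structured-coding correction and the mutual information is computed after conditioning on the cooperative and own-message variables that encoder $k$ already knows. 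The remaining second-sum inequality is obtained from a two-step refinement: once the sums $V_\ell$ for $\ell=1,2,3$ have been made common knowledge through the next block's $V$-layer, encoder $k$ decodes the individual pair $(\tilde X_i,\tilde X_j)$ from $\tilde Y^n$, then sharpens with the current-block observation $Y^n$ conditioned on $(U,S_k,\tilde Y)$; summing these three genie-style mutual informations gives exactly the three-term expression in the theorem.

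Finally I would analyze the receiver. The decoder performs joint backward decoding over pairs of consecutive blocks, simultaneously resolving the fresh messages in block $b$ and confirming the cooperative retransmission from block $b-1$. The resulting error event for a subset $\mathcal{A}$ of users splits into (1) a current-block typicality test for $X_\mathcal{A}^n$ against $Y^n$ given the already-decoded $U,S_{\mathcal{A}^c},\tilde V_{1,2,3}$, contributing $I(X_\mathcal{A};Y|US_{\mathcal{A}^c}\tilde V_1\tilde V_2\tilde V_3)$, and (2) a previous-block confirmation of $U$ from $\tilde Y$ given the already-decoded $\tilde U$, contributing $I(U;Y|\tilde U\tilde Y)$; their sum is the first bound. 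A packing-lemma union bound over the joint error event closes the argument, the block length is driven to infinity, $B$ is taken large to amortize the initiation and terminal blocks, and the convex hull is obtained by time-sharing over $\mathscr{P}$.

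The main obstacle I anticipate is the joint typicality analysis for the quasi-linear layer, specifically controlling the conditional second moment of the number of colliding sum-codewords when the $T_i$ are shifts of a common linear code with non-uniform shaping $W_i$. This is where the ratio $(H(W_i)+H(W_j))/H(W_i\oplus W_j)$ truly comes from and where the standard Gallager-style packing lemma fails; I would rely on the structured-coding Pliam/multi-coset lemma from \cite{ISIT-paper-quasi-linear} to convert pairwise-independence of the dithers into the correct exponential bound on the error probability, and to verify that this bound is simultaneously compatible with the random $U$- and $V$-layers sitting above it.
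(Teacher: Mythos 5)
Your proposal follows essentially the same route as the paper: a block-Markov superposition scheme with a quasi-linear layer (common generator matrix plus dithers) for the fresh messages, a $V$-layer carrying the feedback-decoded sums $T_j\oplus T_k$, a $U$-layer for cooperative resolution, and the ratio $(H(W_i)+H(W_j))/H(W_i\oplus W_j)$ arising because the messages are shaped by non-uniform $W_i$ while the encoders only need to resolve the (lower-entropy) sum. The paper phrases the receiver step as list decoding resolved by the $U$ index rather than backward decoding, and it derives the three theorem bounds by combining the three error-type constraints with $R_i=\tfrac{k}{n}H(W_i)$, but these are presentational rather than substantive differences from what you describe.
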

\begin{proof}
The proof is given in Appendix \ref{sec: thm 1}.
\end{proof}

%In what follows, we give the intuition behind the coding scheme used in Theorem \ref{thm: MAC-FB achievable}. First consider $L$ blocks of transmission for each user. At the beginning of each block, fresh message $\Theta_i$ is available at the $i$th encoder, where $\Theta_i$ take values from a set of size $2^{nR_i}, ~ i=1,2,3$. Let $W_i$ be a Bernoulli random variable  such that $H(W_i)=R_i, ~ i=1,2,3$. In this case, the $\epsilon-$typical set of $W_i$ is a set of size $\approx 2^{nR_i}$.  We index each element of such typical set by $\theta_i\in \{1,2,..., 2^{nR_i}\}$. Therefore, any message $\Theta_i$  corresponds to a $\epsilon$-typical sequence of $W_i$.  Consider a quasi-linear code with generator matrix $\mathbf{G}$, a 

%If the rate $(R_1,R_2,R_3)$ a natural extension of the Cover-Leung scheme for the three-user MAC with feedback problem is presented as follows. At the end of each block of transmission each encoder uses the feedback from the channel to 
\begin{remark}
The rate region in Theorem \ref{thm: MAC-FB achievable} contains the three-user extension of the CL region. For that set $V_1, V_2,V_3$ to be independent of all other random variables. This gives a distribution in $ \mathscr{P}$.
\end{remark}
%Theorem \ref{thm: MAC-FB achievable} gives a new computable single-letter achievable rate region for the three-user MAC with feedback problem. 

\section{Conclusion}\label{sec: conclusion}
We proposed a new single-letter achievable rate region for the three-user discrete memoryless MAC with noiseless feedback. 
We used an example to show that this achievable region strictly contains the CL region. In the example, the proposed coding scheme achieves
optimality in terms of transmission rates. Moreover, we proved that any
optimality achieving scheme for this example must have a specific
algebraic structure. Particularly, the codebooks must be closed
under binary addition.
%
%This rate region is achievable using a new coding strategy which is a combination of quasi-linear codes and a three-user extension of the CL scheme. We argued that optimality achieving coding strategies need to carry some algebraic structure.  

\appendices
\section{Proof of Lemma \ref{lem: example achievable rate using linear codes}}\label{sec: proof of lemma 1 of the example}
\begin{IEEEproof}[Outline of the proof]
We start by proposing a coding scheme. There are $L$ blocks of transmissions in this scheme, with new messages available at each user at the beginning of each block. The scheme sends the messages with $n$ uses of the channel. Let $\mathbf{W}^k_{i,[l]}$ denotes the message of the $i$th transmitter at the $l$th block, where $i=1,2,3$, and $1\leq l \leq L$. Let  $\mathbf{W}^k_{i,[l]}$ take values randomly and uniformly from $\FF_2^k$. In this case, the transmission rate of each user is $R_i=\frac{k}{n}, i=1,2,3$. The first and the second outputs of the $i$th encoder in block $l$ is denoted by $\mathbf{X}^n_{i1,[l]}$ and $\mathbf{X}^n_{i2,[l]}$, respectively.

\textbf{Codebook Construction:}
Select a $k\times n$ matrix $\mathbf{G}$ randomly and uniformly from $\FF_2^{k\times n}$. This matrix is used as the generator matrix of a linear code. Each encoder is given the matrix $\mathbf{G}$. Therefore, the encoders use an identical linear code generated by $\mathbf{G}$.

\textbf{Encoder 1 and 2:}
For the first block set $\mathbf{X}^n_{i2,[1]}=0$, for  $i=1,2,3$. For the block $l$, encoder 1 sends $\mathbf{X}^n_{11,[l]}= \mathbf{W}^k_{1,[l]} \mathbf{G}$ through its first output. For the second output, encoder 1 sends  $\mathbf{X}^n_{11,[l-1]}$ from block $l-1$, that is $\mathbf{X}^n_{12,[l]}=\mathbf{X}^n_{11,[l-1]}$. Similarly, the outputs of the second encoder are  $\mathbf{X}^n_{21,[l]}= \mathbf{W}^k_{2,[l]} \mathbf{G}$, and   $\mathbf{X}^n_{22,[l]}=\mathbf{X}^n_{21,[l-1]}$. 
  
\textbf{Encoder 3:}
The third encoder sends  $\mathbf{X}^n_{31,[l]}= \mathbf{W}^k_{3,[l]} \mathbf{G}$ though its first output. This encoder receives  the feedback from the block $l-1$ of the channel. This encoder wishes to decode $\mathbf{W}^k_{1,[l-1]}\oplus\mathbf{W}^k_{2,[l-1]}$ using $\mathbf{Y}^n_{1,[l-1]}$. For this purpose,  it subtracts  $\mathbf{X}^n_{31,[l-1]}$ from $\mathbf{Y}^n_{1,[l-1]}$. Denote the resulting vector by $\mathbf{Z}^n$. Then, it finds a unique vector $\mathbf{\tilde{w}}^k \in \FF_2^k $ such that $(\mathbf{\tilde{w}}^k\mathbf{G}, \mathbf{Z}^n)$ is $\epsilon$-typical with respect to $P_{XZ}$, where $X$ is uniform over $\FF_2$ , and $Z=X\oplus \tilde{N}_\delta$. If the decoding process is successful, the third encoder sends $\mathbf{X}^n_{32,[l]}= \mathbf{\tilde{w}}^k_{[l-1]} \mathbf{G}$. Otherwise, an event $E_{1, [l]}$ is declared.

\textbf{Decoder:}
The decoder receives the outputs of the channel from the $l$th block, that is $\mathbf{Y}^n_{1,[l]}$ and $\mathbf{Y}^n_{2,[l]}$. The decoding is performed in three steps. First, the decoder uses $\mathbf{Y}^n_{2,[l]}$ to decode $ \mathbf{W}^k_{1,[l-1]}$, and $ \mathbf{W}^k_{2,[l-1]}$. In particular, it finds unique $\mathbf{\tilde{w}}_1^k, \mathbf{\tilde{w}}_2^k \in \FF_2^k$ such that $(\mathbf{\tilde{w}}_1^k\mathbf{G}, \mathbf{\tilde{w}}_2^k\mathbf{G}, \mathbf{Y}^n_{2,[l]})$ are jointly $\epsilon$-typical with respect to $P_{X_{12}X_{22}Y_2}$. Otherwise, an error event $E_{2,[l]}$ will be declared. 

Suppose the first part of the decoding process is successful. At the second step,  the decoder calculates $\mathbf{X}^n_{11,[l-1]}$, and $\mathbf{X}^n_{21,[l-1]}$. This is possible, because  $\mathbf{X}^n_{11,[l-1]}$, and $\mathbf{X}^n_{21,[l-1]}$ are functions of the messages.  The decoder, then, subtracts $\mathbf{X}^n_{11,[l-1]}\oplus \mathbf{X}^n_{21,[l-1]}$ from $Y_{1,[l-1]}$. The resulting vector is 
\begin{align*}
\tilde{\mathbf{Y}}^n=\mathbf{X}^n_{31,[l-1]}\oplus \tilde{N}^n_\delta.
\end{align*}
In this situation, the channel from $X_{31}$ to $\tilde{Y}$ is a binary additive channel with $\delta$ as the bias of the noise. At the third step, the decoder uses $\tilde{\mathbf{Y}}^n$ to decode the message of the third user, i.e., $\mathbf{W}^k_{3,[l-1]}$. In particular, the decoder finds unique $\mathbf{\tilde{w}}_3^k \in \FF_2^k$ such that $(\mathbf{\tilde{w}}_3^k\mathbf{G},  \mathbf{\tilde{Y}}^n)$ are jointly $\epsilon$-typical with respect to $P_{X_{31}\tilde{Y}}$. Otherwise, an error event $E_{3,[l]}$ is declared.

\textbf{Error Analysis:} 
We can show that this problem is equivalent to a point-to-point channel coding problem, where the channel is described by $Z=X\oplus \tilde{N}_\delta$. The average probability of error approaches zero, if $\frac{k}{n}\leq 1-h(\delta)$. 

%\textbf{Analysis for the decoder:}
Suppose there is no error in the decoding process of the third user. That is $E_{1,[l]}^c$ occurs. Therefore, $\mathbf{X}^n_{32,[l]}=\mathbf{X}^n_{22,[l]}\oplus \mathbf{X}^n_{12,[l]}$ with probability one. As a result, the channel in Fig. \ref{fig: Exp. second chann} is in the first state. This implies that the corresponding channel consists of two parallel binary additive channel with independent noises and bias $\delta$. Similar to the argument for $E_1$, it can be shown that $P(E_{2,[l]}| E_{1,[l]})\rightarrow 0$, if $\frac{k}{n}\leq 1-h(\delta)$. Lastly, we can show that conditioned on $E_{1,[l]}^c$ and $E_{2,[l]}^c$,  the probability of $E_{3,[l]}$ approaches zero, if $\frac{k}{n}\leq 1-h(\delta)$. 

As a result of the above argument, the average probability of error approaches $0$, if $\frac{k}{n}\leq 1-h(\delta)$. This implies that the rates $R_i=1-h(\delta), i=1,2,3$ are achievable, and the proof is completed.
\end{IEEEproof}

\section{Proof of Lemma \ref{lem: x_2+x_1 needs to be decoded}}\label{sec: proof of lemma 2 of the example}
\begin{IEEEproof}
Let $R_i$ be the rate of the $i$th encoder. We have $R_i\geq 1-h(\delta)-\epsilon$. We apply the generalized Fano's inequality (Lemma 4.3 in \cite{Kramer-thesis}) for decoding of the messages. More precisely, as $\bar{P}\leq \epsilon$, we have $$\frac{1}{M_1M_2M_3}H(\Theta_1, \Theta_2, \Theta_3| \mathbf{Y}^N)\leq h(\bar{P}) \leq h(\epsilon)$$

By the definition of the rate we have 
\begin{align}\nonumber
R_1+R_2+R_3&=\frac{1}{N}H(\Theta_1, \Theta_2, \Theta_3)\\\nonumber
& \leq \frac{1}{N}I(\Theta_1, \Theta_2, \Theta_3; \mathbf{Y}^n)+o(\epsilon)\\\nonumber
&\stackrel{(a)}{\leq } \frac{1}{N}I(\mathbf{X}^n_1, \mathbf{X}^n_2, \mathbf{X}^n_3; \mathbf{Y}^N)+o(\epsilon)\\\label{eq: last bound}
&\stackrel{(b)}{\leq } 3-\frac{1}{N}H(\mathbf{Y}^n|\mathbf{X}^n)+o(\epsilon),
\end{align}
where  $(a)$ is because of (\ref{eq: chann probabilities}), and for $(b)$ we use the fact that $Y$ is a vector of three binary random variables, which implies$\frac{1}{N}H(Y^N)\leq 3$.
%Since by assumption the rates $R_i$ is achievable, the rate $(R_i, 
%R_2, R_3)$ is inside the capacity region $\mathcal{C}_{FB}$. We use Definition \ref{def: MAC FB capacity} and Preposition \ref{prep: capacity region of MAC-FB} to characterize an upper-bound on the sum-rate. In particular, the bound in (\ref{eq: capacity sum rate}) is used as an upper-bound on the sum-rate. Therefore, the inequality 
%\begin{align*}
%R_1+R_2+R_3 \leq I_N(X_1,X_2,X_3 \rightarrow Y)
%\end{align*} 
%holds, where $I_N$ is the normalized direct information which is defined in (\ref{eq: normalized direct info}).
%%%%%%%%%%%%%% The left-hand side (LHS) of the above inequality is lower-bounded by $R_1+R_2+R_3  \geq 3(1-h(\delta)-\epsilon).$
%As a result, we obtain 
%\begin{align}\label{eq: lower bound on I_N}
%I_N(X_1,X_2,X_3 \rightarrow Y)\geq 3(1-h(\delta)-\epsilon)
%\end{align}
%The LHS of this inequality can be written as,
%\begin{align*}
%\frac{1}{N}H(Y^N)-\frac{1}{N}H(Y^N|| X_1^N X_2^N X_3^N).
%\end{align*}
%Since $Y$ is a vector of three binary random variables, $\frac{1}{N}H(Y^N)\leq 3$. 
As the channel is memoryless, and since (\ref{eq: chann probabilities}) holds, we have 
\begin{align*}
\frac{1}{N}H(\mathbf{Y}^n|\mathbf{X}^n)=\frac{1}{N}\sum_{l=1}^N H(Y_l | X_{1,l} X_{2,l} X_{3,l}).
\end{align*}
Let $P(X_{32,l}\neq X_{12,l}\oplus X_{12,l})=q_l$, for $l\in [1:N]$. Denote $\bar{q}_l=1-q_l$. We can show that,
\begin{align*}
H(Y_l | X_{1,l} X_{2,l} X_{3,l})=(1+2\bar{q}_l)h(\delta)+2q_l.
\end{align*}
We use the above argument, and the last inequality in (\ref{eq: last bound}) to give the following bound
\begin{align*}\nonumber
R_1+R_2+R_3 &\leq 3-\frac{1}{N}\sum_{l=1}^N [(1+2\bar{q}_l)h(\delta)+2q_l]+o(\epsilon)\\\label{eq: upper bound on I_N}
&= 3- 3 h(\delta)+\frac{1}{N}2(1-h(\delta))\sum_{l=1}^N q_l+o(\epsilon)
\end{align*}
By assumption $R_1+R_2+R_3  \geq 3(1-h(\delta)-\epsilon).$ Therefore, using the above bound we obtain,
\begin{align*}
\frac{3 \epsilon+o(\epsilon)}{2(1-h(\delta))} &\geq \frac{1}{N}\sum_{l=1}^N q_l \stackrel{(a)}{\geq }\frac{1}{N}\sum_{l\in \mathcal{I}^N_c} q_l,
\end{align*}
where $(a)$ holds, because we remove the summation over all $l \notin \mathcal{I}^N_c$. We defined  $\mathcal{I}^N_c$ as in the statement of this Lemma. Note that if $l \in \mathcal{I}^N_c$, then $q_l\geq c$. Finally, we obtain 
\begin{align*}
\frac{|\mathcal{I}^N_c|}{N} \leq \frac{3 \epsilon+o(\epsilon)}{2 c (1-h(\delta))}
\end{align*}
\end{IEEEproof}

\section{Proof of Lemma \ref{lem: structure in the code}}\label{seq: lem 3}
\begin{proof}
Let $\mathcal{I}_c^N$ be as in Lemma \ref{lem: x_2+x_1 needs to be decoded}. The average probability of error for decoding $X_{12}^N\oplus X_{22}^N$ is bounded as 
\begin{align*}
\bar{P}_e&=\frac{1}{N}\sum_{l=1}^N P(X_{32,l}\neq X_{12,l}\oplus X_{22,l})\\
&=\frac{1}{N}\sum_{l\in \mathcal{I}_c^N} P(X_{32,l}\neq X_{12,l}\oplus X_{22,l})+\frac{1}{N}\sum_{l\notin \mathcal{L}_c^N} P(X_{32,l}\neq X_{12,l}\oplus X_{22,l})\\
&\leq \frac{|\mathcal{I}_c^N|}{N}+c(1-\frac{|\mathcal{I}_c^N|}{N})\\
&=(1-c)\frac{|\mathcal{I}_c^N|}{N} +c\\
&\leq (1-c)\frac{\eta(\epsilon)}{2c(1-h(\delta))}+c 
\end{align*}
As a result as $\epsilon\rightarrow 0$, then $\bar{P}_e\rightarrow c$. Since $c>0$ is arbitrary, $\bar{P}_e$ can be made arbitrary small. Hence, for any $\epsilon'>0$, and there exist $\epsilon>0$ and large enough $N$ such that $\bar{P}_e < \epsilon'$. Note that $X^N_{32}$ is a function of $M_3, Y_1^N, Y_{12}^N$ and $Y_{22}^N$. Next we argue that to get $\bar{P}_e < \epsilon'$, it is enough for $X_{32}^N$ to be a function of $M_3, Y_1^N$.  More precisely, given $X_{32, l}$, the random variables $Y_{12,l}$ and $Y_{22,l}$ are independent of $X_{12, l}\oplus X_{22, l}$. To see this, we need to consider two cases.  If $X_{32, l}=X_{12, l}\oplus X_{22, l}$ then the argument follows trivially. Otherwise, $Y_{12,l}=X_{12,l}\oplus N_{1/2}$, where $N_{1/2}\sim Ber(1/2)$, and it is independent of $X_{12,l}$. Hence in this case,  $Y_{12,l}$ is independent of $X_{12,l}$. Similarly, $Y_{22,l}$ is independent of $X_{22,l}$. 

By subtracting $X_{31}^N$ from $Y_1^N$, we get $Z^N := X_{11}^N \oplus X_{21}^N\oplus N_{\delta}^N$. Next, we argue that the third encoder uses $Z^N$ to decode $X_{12}^N\oplus X_{22}^N$. Since $M_3$ is independent of $M_1$ and $M_2$, it is independent of $X_{1j}^N, X_{j2}^N$ for $j=1,2$. Therefore $Z^N$ is independent of $M_3$. Hence, $X_{32}^N$ is function of $Z^N$. Intuitively, we convert the problem of decoding $X_{11}^N \oplus X_{21}^N$ to a point to point channel coding problem. The channel in this case is a binary additive channel with noise $N_\delta \sim Ber(\delta)$. In this channel coding problem the codebook at the encoder is $\mathcal{C}_{12}\oplus \mathcal{C}_{22}$.  The capacity of this channel equals $1-h_b(\delta)$. Since the average probability of error is small,  we can use the generalized Fano's inequality to bound the rate of the encoder. As a result, it can be shown that  
\begin{align}
\frac{1}{N}\log_2||\mathcal{C}_{12}\oplus \mathcal{C}_{22}|| \leq 1-h_b(\delta)+ \eta(\epsilon),
\end{align}
where $\eta(\epsilon)\rightarrow 0$ as $\epsilon \rightarrow 0$. 
\begin{claim}
The following bound holds 
\begin{align} \label{eqe: bound on C_12 and C_22}
\frac{1}{N}\log_2||\mathcal{C}_{j2}|| \geq 1-h_b(\delta)- \gamma_j(\epsilon),
\end{align}

 where $j=1,2$ and $\gamma_j(\epsilon)\rightarrow 0$ as $\epsilon \rightarrow 0$.
\end{claim}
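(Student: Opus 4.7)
The plan is to lower-bound $\log_2 ||\mathcal{C}_{j2}||$ by $H(X_{j2}^N)$ and then use Fano's inequality together with a sum-rate argument to show that essentially all of $\Theta_j$'s information must flow through $X_{j2}^N$. The intuition is that the first channel is a noisy three-user binary adder with directed sum-capacity $1-h_b(\delta)$; by Lemma \ref{lem: x_2+x_1 needs to be decoded} the second channel is in state 1 for almost every coordinate, and in state 1 its output does not depend on $X_{32}^N$. Therefore $\Theta_3$'s rate $\approx 1-h_b(\delta)$ must flow through $Y_1^N$, exhausting the first channel's capacity and leaving essentially no room for $\Theta_j$'s information there.

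Since encoders $1$ and $2$ have no feedback, $X_{j2}^N$ is a deterministic function of $\Theta_j$ for $j\in\{1,2\}$, so
\begin{align*}
\log_2 ||\mathcal{C}_{j2}|| \geq H(X_{j2}^N) = NR_j - H(\Theta_j \mid X_{j2}^N),
\end{align*}
and it suffices to bound $H(\Theta_j \mid X_{j2}^N)$. By Fano, $H(\Theta_j \mid \mathbf{Y}^N) \leq N\epsilon_1(\epsilon)$ with $\epsilon_1(\epsilon)\to 0$, hence
\begin{align*}
H(\Theta_j \mid X_{j2}^N) \leq N\epsilon_1(\epsilon) + I(\Theta_j; \mathbf{Y}^N \mid X_{j2}^N).
\end{align*}
Splitting $\mathbf{Y}^N=(Y_1^N,Y_2^N)$ and using the identity $I(\Theta_j;Y_1^N\mid X_{j2}^N) = I(\Theta_j;Y_1^N) - I(X_{j2}^N;Y_1^N) \leq I(\Theta_j;Y_1^N)$, it remains to bound $I(\Theta_j;Y_1^N)$ and $I(\Theta_j;Y_2^N\mid Y_1^N,X_{j2}^N)$.

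To bound $I(\Theta_j;Y_1^N)$, I first apply Lemma \ref{lem: x_2+x_1 needs to be decoded}: the relation $X_{32,l}=X_{12,l}\oplus X_{22,l}$ holds for almost every $l$, so in state 1 the second-channel outputs are independent of $X_{32}^N$, giving $I(\Theta_3;Y_2^N\mid Y_1^N) \leq N\eta(\epsilon)$; Fano applied to $\Theta_3$ then yields $I(\Theta_3;Y_1^N) \geq N(1-h_b(\delta)) - N\eta'(\epsilon)$. On the other hand, since $H(Y_{1,l}\mid Y_1^{l-1},\Theta_1,\Theta_2,\Theta_3) \geq h_b(\delta)$ (the noise entropy in the first channel), one has $I(\Theta_1,\Theta_2,\Theta_3;Y_1^N) \leq N(1-h_b(\delta))$. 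Subtracting and using $\Theta_j\perp\Theta_3$ gives $I(\Theta_j;Y_1^N) \leq I(\Theta_1,\Theta_2;Y_1^N\mid\Theta_3) \leq N\eta'(\epsilon)$.

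For the residual term, $I(\Theta_j;Y_2^N\mid Y_1^N,X_{j2}^N) = I(X_{j1}^N;Y_2^N\mid Y_1^N,X_{j2}^N)$ since $\Theta_j\leftrightarrow(X_{j1}^N,X_{j2}^N)$. The key observation is that $X_{j1}^N$ influences $Y_2^N$ only through the feedback-generated $X_{32}^N$; conditioning further on $X_{32}^N$ kills the term, and the remaining $I(X_{j1}^N;X_{32}^N\mid Y_1^N,X_{j2}^N)$ is $o(N)$ by Lemma \ref{lem: x_2+x_1 needs to be decoded} together with the fact, already established earlier in this proof, that $X_{32}^N$ is essentially decoded from $Y_1^N$ alone. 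Combining all the estimates yields $\log_2||\mathcal{C}_{j2}|| \geq N(1-h_b(\delta)) - N\gamma_j(\epsilon)$ with $\gamma_j(\epsilon)\to 0$. The main obstacle will be this last step, where the feedback-induced coupling between $X_{j1}^N$ and $X_{32}^N$ must be controlled delicately.
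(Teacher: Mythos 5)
Your opening reduction is sound and matches the spirit of the paper's argument: since encoders 1 and 2 have no feedback, $\log_2\|\mathcal{C}_{j2}\|\geq H(X_{j2}^N)=NR_j-H(\Theta_j\mid X_{j2}^N)$, and the task becomes showing that essentially none of $\Theta_j$'s information reaches the decoder outside of $X_{j2}^N$. The problem is in how you execute that second part. Your pivotal estimate is $I(\Theta_3;Y_2^N\mid Y_1^N)\leq N\eta(\epsilon)$, justified by the observation that in state 1 the second channel's outputs do not depend on $X_{32}^N$. That observation gives (near) \emph{unconditional} independence of $Y_2^N$ and $\Theta_3$; it does not give independence \emph{conditioned on} $Y_1^N$, and in this example the conditional mutual information is in fact large. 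Concretely, in the linear-code scheme $Y_1^N=X_{11}^N\oplus X_{21}^N\oplus X_{31}^N\oplus\tilde{N}^N$ with the interference $X_{11}^N\oplus X_{21}^N$ ranging over a rate-$(1-h_b(\delta))$ code, so $I(\Theta_3;Y_1^N)\approx 0$: the decoder cannot extract $M_3$ from $Y_1^N$ alone. It first decodes $(\Theta_1,\Theta_2)$ from $Y_2^N$, reconstructs $X_{11}^N\oplus X_{21}^N$, strips it from $Y_1^N$, and only then recovers $\Theta_3$; hence $I(\Theta_3;Y_2^N\mid Y_1^N)\approx N(1-h_b(\delta))$, not $o(N)$. (A toy version: $A,B$ independent fair bits, $Y_2=A$, $Y_1=A\oplus B$; then $Y_2\perp B$ yet $I(B;Y_2\mid Y_1)=1$.) Consequently your deduction $I(\Theta_3;Y_1^N)\geq N(1-h_b(\delta))-N\eta'$, and everything downstream of it — the saturation argument and the bound $I(\Theta_j;Y_1^N)\leq N\eta'$ — does not follow as written.

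The fix is to apply the chain rule in the other order, $I(\Theta_3;\mathbf{Y}^N)=I(\Theta_3;Y_2^N)+I(\Theta_3;Y_1^N\mid Y_2^N)$, where the \emph{first} term is the one controlled by the state-1 independence (together with Lemma \ref{lem: x_2+x_1 needs to be decoded}, which forces $X_{32}^N$ to essentially equal $X_{12}^N\oplus X_{22}^N$ and hence carry no information about $M_3$). This is precisely the paper's route: since $M_3$ influences the outputs only through $Y_1^N$, and the third user's rate exhausts the capacity $1-h_b(\delta)$ of the first channel given the interference $X_{11}^N\oplus X_{21}^N$, the decoder must recover that interference — and hence $M_1$ and $M_2$ — from the second channel alone; Fano's inequality applied to decoding $M_j$ from a channel whose $j$th input is confined to $\mathcal{C}_{j2}$ then yields $\frac{1}{N}\log_2\|\mathcal{C}_{j2}\|\geq R_j-\gamma_j(\epsilon)$. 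Your skeleton can be repaired by conditioning on $Y_2^N$ rather than $Y_1^N$ throughout, but as submitted the argument has the conditioning backwards at its critical step.
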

\begin{proof}[Outline of the proof]
First, we  show that the decoder must decode $M_3$ from $Y_1^N$. We argued in the above that $X_{32}^N$ is independent of $M_3$. Hence, the message $M_3$ is encoded only to $X_{31}^N$. Since $X_{31}^N$ is sent though the first channel in Example 1, the decoder must decode $M_3$ from  $Y_1^N$. Next, we argue that the receiver must decode $M_1$ and $M_2$ from $Y_{21}^N$ and $Y_{22}^N$, respectively. Note that the rate of the third encoder is $1-h_b(\delta)$, which equals to the capacity of the first channel given $X_{11}^N \oplus X_{21}^N$. Therefore, the decoder can decode $M_3$, if it has $X_{11}^N \oplus X_{21}^N$. Hence, the decoder must reconstruct $X_{11}^N \oplus X_{21}^N$ from the second channel. It can be shown that this is possible, if the decoder can decode $M_1 $ and $M_2$ from the second channel. As a result, from Fano's inequality, the bounds in the Claim hold.  
\end{proof}

Finally, using (7) and (\ref{eqe: bound on C_12 and C_22}) we get 
\begin{align*}
0 \leq \frac{1}{N}\log_2||\mathcal{C}_{12}\oplus \mathcal{C}_{22}||-\frac{1}{N}\log_2||\mathcal{C}_{j2}|| \leq \eta(\epsilon)+\gamma_j(\epsilon), \quad j=1,2.
\end{align*}
This completes the proof.
\end{proof}
\section{Proof of Theorem \ref{thm: MAC-FB achievable}}\label{sec: thm 1}
\begin{proof}
We build upon QLCs and propose a new coding scheme. 
Let $W_i$ be a random variable with distribution $P_{W_i}$. Fix integer $k$ and $n$. Consider the set of all $\epsilon$-typical sequences $W_i^k$. Without loss of generality assume that the new message at the $i$th encoder is a sequence $w_i^k$ which is selected randomly and uniformly from $A_\epsilon^{(k)}(W_i)$.  In this case $M_i=|A_\epsilon^{(k)}(W_i)|$. 

Define $\mathcal{L}[l-2]$ as the list of highly likely messages corresponding to the block $l-2$ at the decoder. This list is defined as 
\begin{align*}
\mathcal{L}[l-2] \triangleq \{(\hat{w}_1,\hat{w}_2,\hat{w}_3) \in A_\epsilon^{(n)}(W_1,W_2,W_3): (Y_{[l-2]}, U_{[l-2]}, S_{1,[l-2]},S_{2,[l-2]},S_{3,[l-2]}) \in A_\epsilon^{(n)}(\tilde{Y},\tilde{U}, \tilde{S}_1,\tilde{S}_2,\tilde{S}_3)\}
\end{align*}  

\textbf{Codebook Construction:}
For each $1\leq l \leq L$ generate $M_{0,[l]}$ sequences $U_{[l,m]}$, each according to $P_U^n$, where $1\leq m\leq M_{0,[l]}$. For any vector $w_i^k \in \FF_2^k$, denote $$t_i(w_i^k)\triangleq w_i^k \mathbf{G}+b_i^n, \quad i=1,2,3,$$
where $\mathbf{G}$ is a $k\times n$ matrix with elements chosen randomly and uniformly from $\FF_2$, and $b_i^n $ is a vector selected randomly and uniformly from $\FF_2^n$. 
 
For each $u^n \in \mathcal{U}^n$ and $t^n ,v^n \in \FF_2^n$ generate $M_i$ sequences $X_{i,[l,m]}^n$ randomly with conditional distribution $\prod_{j=1}^n P(\cdot | u_j, t_j, v_j)$, where $m\in [1:M_i]$. Denote such sequences by $x_i(u^n, t^n, v^n, m_i)$.

\textbf{Initialization:}
For block $l=0$, set $M_{0,[0]}=1,U_{[0,1]}=0$ and .  For block $l=1$, set $M_{0,[1]}=1,U_{[1,1]}=0,\mathbf{v}_{i,[1]}=0$.

\textbf{Encoding}
\paragraph{\textbf{Block $l=1$}}
At block $l=1$, given a message $\mathbf{w}_{i, [1]} \in A_\epsilon^{(k)}(W_i)$, the $i$th encoder calculates $t_i(\mathbf{w}_{i,[1]})$. This sequence is denoted by $t_{i,[1]}$. Next the encoder $i$ calculates $x_i(u_{[0,1]},t_{i,[1]}, v_{i,[1]}, \mathbf{w}_{i,[1]})$. Denote such sequence by $\mathbf{x}_{i, [1]}$. Finally,  the $i's$ encoder sends $\mathbf{x}_{i, [1]}$. 

\paragraph{\textbf{Block $l=2$}}
At the beginning of the block $l=2$, each encoder $i$ receives $Y_{[1]}$ as a feedback from the channel. The encoder $i$ wishes to decode sum of the messages of the other two encoders. The first encoder finds unique $\hat{w}_{23} \in A_\epsilon^{(k)}(W_2+W_3)$ such that $$(\hat{w}_{23}\mathbf{G}+b_2+b_3, Y_{[0]})\in A_{\epsilon}^{(n)}(T_2+T_3, Y|u_{[0]} t_{1,[0]}, x_{1,[l]}).$$ 
Otherwise an encoding error will be declared. If $\hat{w}_{23}$ was unique, the encoder sets $v_{1,[2]}=\hat{w}_{23}\mathbf{G}+b_2+b_3$.
Similarly encoder $2$ finds unique $\hat{w}_{13}$ and determines $v_{2,[2]}$. Also encoder $3$ finds unique $\hat{w}_{12}$, and determines $v_{3,[2]}$. 

\paragraph{\textbf{Block $l>2$}}
At the beginning of the block $l>2$, each encoder $i$ receives $Y_{[l-1]}$ as a feedback from the channel. The encoder $i$ wishes to decode sum of the messages of the other two encoders from block $l-1$. Next, given $Y_{[l-2]}$, the encoder $i$ decodes the messages of the other two encoders from block $l-2$. 

The first decoding process is the same as the decoding process in block $l=2$. Suppose $\hat{w}_{jk}$ and $v_{i,[l]}$ are the outputs of this decoding process at the encoder $i$.  The next stage of the decoding process is as follows. The first encoder finds unique $\hat{w}_{2,[l-2]}\in A_\epsilon^{(k)}(W_2)$ and $\hat{w}_{3,[l-2]}\in A_\epsilon^{(k)}(W_3)$ such that

1) $\hat{w}_{2,[l-2]}+\hat{w}_{3,[l-2]}=\hat{w}_{23}$.

2) \begin{align*}
 \Big(& t_2(\hat{w}_{2,[l-2]}),x_2\big(u^n,t_2(\hat{w}_{2,[l-2]}), v_{2,[l-2]},\hat{w}_{2,[l-2]}\big),\\
  & t_3(\hat{w}_{3,[l-2]}),x_3\big(u^n,t_3(\hat{w}_{3,[l-2]}), v_{3,[l-2]},\hat{w}_{3,[l-2]}\big), Y_{[l-2]} \Big) \in A_\epsilon^{(n)}(\tilde{T}_2\tilde{X}_2\tilde{T}_3\tilde{X}_3\tilde{Y} |  s_{1,[l-2]} v_{2,[l-2]}, v_{3,[l-2]})
\end{align*} 

3) $ (\hat{v}_{2,[l-1]},\hat{v}_{2,[l-1]}, Y_{[l-1]})\in A_\epsilon^{(n)}(V_2V_3Y | u_{[l-1]} s_{1[l-1])}) $,

where $v_{i,[l-2]}$ is known at the encoder from the previous blocks, and $\hat{v}_{2,[l-1]}, \hat{v}_{3,[l-1]}$ are defined as 
\begin{align*}
&\hat{v}_{2,[l-1]}=(w_{1,[l-2]}+\hat{w}_{3,[l-2]})\mathbf{G}+b_1+b_3\\
&\hat{v}_{3,[l-1]}=(w_{1,[l-2]}+\hat{w}_{2,[l-2]})\mathbf{G}+b_1+b_2.
\end{align*}
If the messages are not unique, an error will be declared. 

The next step, the encoder creates the list $\mathcal{L}[l-2]$ as defined in the above. If $(w_{1,[l-2]},\hat{w}_{2,[l-2]},\hat{w}_{3,[l-2]}) \in \mathcal{L}[l-2]$, then the first encoder finds the index $m$ corresponding to $(w_{1,[l-2]},\hat{w}_{2,[l-2]},\hat{w}_{3,[l-2]})$. Then the encoder calculates the corresponding $u_{[l-2, m}$. Denote such sequence by $u_{[l]}$. This sequence is used for transmission of new messages at block $l$. If the decoding processes are successful, then the sequences $v_{1,[l]}$ and $u_{[l]}$ are determined. The next step is the encoding process, which is the same as in the block $l=1$.

\paragraph{\textbf{Decoding at block $l$}}
The decoder knows the list of highly likely messages . This list is $\mathcal{L}[l-2]$ as defined in the above. Given $Y_{[l]}$ the decoder wishes to decode $U_{[l]}$. Note that $U_{[l]}$ determines the index of the messages in $\mathcal{L}[l-2]$ which were transmitted at block $l-2$. This decoding process is performed by finding unique index $m\i [1: M_{0,[l]}]$ such that $$(U_{[l,m]}, Y_{[l]}) \in A_\epsilon^{(n)}(U, Y| u_{[l-1]}, y_{[l-1]})$$

\paragraph{\textbf{Error Analysis}}
There are three types of decoding errors: 1) error in decoding sum of the messages of the other two encoders, i.e., $\hat{w}_{jk}$ is not unique at the encoder $i$. 2) error in the decoding of the individual messages of the other encoders, i.e., $\hat{w}_{j,[l]},\hat{w}_{k,[l]}$ are not unique at the encoder $i$. 3) error at the decoder, i.e. the index $m$ is not unique. Using standard arguments for each type of the errors we get the following bounds:

The probability of the first type of the errors approaches zero, if fro any distinct $i,j,k \in \{1,2,3\}$ the following bound holds:
\begin{align}\label{eq: bound 1}
\frac{k}{n}H(W_j+W_k) \leq I(T_j+T_k; Y|U T_k V_k X_k).
\end{align}

The probability of the second type of the errors approaches zero, if
\begin{align}\label{eq: bound 2}
\frac{k}{n}H(W_i|W_j+W_k)\leq I(\tilde{X}_i \tilde{X}_j; \tilde{Y}| \tilde{U} \tilde{S}_k \tilde{V}_1 \tilde{V}_2 \tilde{V}_3 )
\end{align}

Note that the third type of error occurs with high probability, if $|\mathcal{L}[l]| > 2^{nI(U;Y | \tilde{U}, \tilde{Y})}$. It can be shown that for sufficiently large $n$,
\begin{align*}
P\{|\mathcal{L}[l]| < 2^{n\max_{\mathcal{A}\subseteq \{1,2,3\}} F_{\mathcal{A}} +o(\epsilon)} \} > 1-\epsilon, 
\end{align*}
where $$F_{\mathcal{A}} \triangleq \frac{k}{n} H(W_{\mathcal{A}})-I(X_\mathcal{A}; Y| U S_{\mathcal{A}^c} \tilde{V}_1, \tilde{V}_2, \tilde{V}_3) $$
Therefore, the probability of third type of the errors approaches zero, if the following bounds hold:
\begin{align*}
F_{\mathcal{A}}\leq I(U;Y | \tilde{U}, \tilde{Y}),
\end{align*} 
Using the definition of $F_{\mathcal{A}}$ and the above bound, we can get the following bound:
\begin{align} \label{eq: bound 3}
\frac{k}{n} H(W_{\mathcal{A}}) \leq I(X_\mathcal{A}; Y| U S_{\mathcal{A}^c} \tilde{V}_1, \tilde{V}_2, \tilde{V}_3)+ I(U;Y | \tilde{U}, \tilde{Y})
\end{align} 

Note that the effective rate of our coding scheme is $R_i \triangleq \frac{1}{n}\log_2 M_i= \frac{k}{n}H(W_i)$ for $i=1,2,3$. Finally, it can be shown that using this equation and the bounds in \eqref{eq: bound 1}, \eqref{eq: bound 2}, and \eqref{eq: bound 3}, the following bounds are achievable
\begin{align*}
R_\mathcal{A}&\leq I(X_\mathcal{A};Y|US_{\mathcal{A}^c} \tilde{V}_1 \tilde{V}_2 \tilde{V}_3)+I(U;Y|\tilde{U}\tilde{Y})\\
R_i+R_j &\leq I(T_i\oplus T_j; Y| UT_k X_k \tilde{V}_1 \tilde{V}_2 \tilde{V}_3)\\&+I(\tilde{X}_i \tilde{X}_j;\tilde{Y}|\tilde{U} \tilde{S}_k \tilde{V}_1 \tilde{V}_2 \tilde{V}_3 V_k)\\&+I(\tilde{X}_i \tilde{X}_j;Y | \tilde{U} \tilde{S}_k \tilde{V}_1 \tilde{V}_2 \tilde{V}_3 U S_k \tilde{Y})\\
R_i+R_j &\leq \frac{H(W_i)+H(W_j)}{H(W_i\oplus W_j)} I(T_i\oplus T_j;Y|UT_k X_k).
 \end{align*}
\end{proof}

\end{document}